\newtheorem{definition}{Definition}
\newtheorem{thm}{Theorem}
\newtheorem{lemma}{Lemma}
\newtheorem{corollary}{Corollary}
\begin{document}

\title{Approximately Optimal Mechanisms for Strategyproof Facility Location: Minimizing $L_p$ Norm of Costs}

\author{
Itai Feigenbaum
\thanks{IEOR Department, Columbia University, New York,
NY;
{\tt itai@ieor.columbia.edu}} \and
Jay Sethuraman
\thanks{IEOR Department, Columbia University, New York,
NY;
{\tt jay@ieor.columbia.edu}. Research supported by NSF grant
CMMI-0916453 and CMMI-1201045.}
\and
Chun Ye
\thanks{IEOR Department, Columbia University, New York,
NY;
{\tt cy2214@columbia.edu}}
}

\date{September 2014}

\maketitle

\begin{abstract}
We consider the problem of locating a single facility on the real line.
This facility serves a set of agents, each of whom is located on the
line, and incurs a cost equal to his distance from the facility. An
agent's location is private information that is known only to him.
Agents report their location to a central planner who decides where to
locate the facility. The planner's objective is to minimize a "social"
cost function that depends on the agent-costs. However, agents might
not report truthfully; to address this issue, the planner must restrict
himself to {\em strategyproof} mechanisms, in which truthful reporting
is a dominant strategy for each agent. A mechanism that simply chooses
the optimal solution is generally not strategyproof, and so the planner
aspires to to use a mechanism that effectively {\em approximates} his
objective function. This general class of problems
was first studied by Procaccia and Tennenholtz and has been the subject
of much research since then.

In our paper, we study the problem described above with the social cost
function being the $L_p$ norm of the vector of agent-costs. We show
that the median mechanism (which is known to be strategyproof) provides
a  $2^{1-\frac{1}{p}}$ approximation ratio, and that is the optimal
approximation ratio among all deterministic strategyproof mechanisms.
For randomized mechanisms, we present two results. First, we  present a negative result: we show that for integer
$\infty>p>2$, no mechanism---from  a rather large class of randomized
mechanisms---
has an approximation ratio better than that of the median mechanism.
This is in contrast to the case of $p=2$ and $p=\infty$ where a randomized mechanism provably helps improve the worst case approximation ratio.
Second, for the case
of 2 agents, we show that a mechanism called LRM, first
designed by Procaccia and Tennenholtz for the special case of $L_{\infty}$,
provides the optimal  approximation ratio among all randomized
mechanisms. 
\end{abstract}

\section{Introduction}
We consider the problem of locating a single facility on the real
line. This facility serves a set of $n$ agents, each of whom is
located somewhere on the line as well. Each agent cares about his
distance to the facility, and incurs a disutility
(equivalently, cost) that is equal to
 his distance to access the facility. An agent's location is assumed
to be private information that is known only to him. Agents report
their locations to a central planner who decides where to locate
the facility based on the reports of the agents. The planner's objective
is to minimize a ``social'' cost function that depends on the vector of
distances that the agents need to travel to access the facility. It is
natural for the planner to consider locating the facility at a point
that minimizes her objective function, but in that case the agents may
not have an incentive to report their locations truthfully. As an example,
consider the case of 2 agents located at $x_1$ and $x_2$ respectively, and
suppose the location that optimizes the planner's objective
is the mid-point $ (x_1 + x_2)/2$. Then, assuming $x_1 < x_2$, agent 1
has an incentive to report a location $x'_1 < x_1$ so that the planner's
decision results in the facility being located closer to his true location.
The planner can address this issue by restricting herself to a {\em strategyproof}
mechanism: by this we mean that it should be a (weakly) dominant strategy
for each agent to report his location truthfully to the central planner.
This, of course, is an attractive property, but it comes at a cost: based on the
earlier example, it is clear that the planner cannot hope to optimize her
objective.
One way to avoid this difficulty is to assume an environment
in which agents (and the planner) can make or receive payments; in such
a case, the planner selects the location of the facility, and also a payment
scheme, which specifies the amount of money an agent pays (or receives) as a
function of the reported locations of the agents as well as the location of
the facility. This option gives the planner the ability to support the ``optimal''
solution as the outcome of a strategy-proof mechanism by constructing a carefully
designed payment scheme in which any potential benefit for a misreporting agent from 
a change in the location of the facility is offset by an increase in his payment.

There are many settings, however, in which such monetary compensations are either
not possible or are undesirable. This motivated 
Procaccia and Tennenholtz~\cite{pt13} to formulate the notion of {\em Approximate
Mechanism Design without Money}. In this model the planner restricts herself to
strategy-proof mechanisms, but is willing to settle for one that does not
necessarily optimize her objective. Instead, the planner's goal is to find a 
mechanism that effectively {\em approximates} her objective function. This is captured by
the standard notion of approximation that is widely used in the CS literature: for
a minimization problem, an algorithm is an $\alpha$-approximation if the solution it
finds is guaranteed to have cost at most $\alpha$ times that of the optimal 
cost ($\alpha \geq 1$).

Procaccia and Tennenholtz~\cite{pt13} apply the notion of approximate mechanism
design without money to the facility  location problem considered here
for two different objectives: (i) {\em minisum}, where the goal is to minimize the
sum of the costs of the agents; and (ii) {\em minimax}, where the goal is to minimize
the maximum agent cost. They show that for the minimax objective choosing any 
$k$-th median---picking the $k$th largest reported location---is a strategyproof, 
$2$-approximate mechanism. They design a
randomized mechanism called LRM (Left-Right-Middle) and show that it is a 
strategyproof, $3/2$-approximate mechanism; futhermore, they 
show that those mechanisms provide the optimal worst-case approximation 
ratio possible (among all deterministic and randomized strategyproof 
mechanisms, respectively). For the {\em minisum} objective, it is known 
that choosing the median reported location is optimal and 
strategyproof~\cite{moulin80}.
Feldman and Wilf~\cite{fw13} consider the same facility location problem
on a line but with the social cost function
being the $L_2$ norm of the agents' costs\footnote{Feldman and Wilf actually 
used the sum of squares of the agents' costs, 
but most of their results can be easily converted to the $L_2$ norm.
Of course, the  approximation ratios they report need to be adjusted as well.}.
They show that the median is 
a $\sqrt{2}$-approximate strategyproof mechanism for this objective 
function, and provide a randomized $(1+\sqrt{2}) / 2$-approximate 
strategyproof mechanism. In addition, 
facility location on other topologies such as circles and trees are considered by 
Alon et al. ~\cite{afpt09, afpt10b, afpt10a} as well as by Feldman and Wilf ~\cite{fw13}.

In our paper, we follow the suggestion of Feldman and Wilf~\cite{fw13}
and study the problem of locating a single facility on a line, but with the
objective function being the
$L_p$ norm of the vector of agent-costs (for general $p \geq 1$). We 
define the problem formally in section 2. In section 3, we show 
that the median mechanism (which is strategyproof) provides 
a $2^{1-\frac{1}{p}}$ approximation ratio, and that is the 
optimal approximation ratio among all deterministic strategyproof 
mechanisms. We move onto randomized mechanisms in section 4. First, we present a negative result: we show that for integer $\infty>p>2$, 
{\em no} mechanism---from 
a rather large class of randomized mechanisms---
has an approximation ratio better than that of the median mechanism, as the number of agents
goes to infinity.
It is worth noting that all the mechanisms proposed in literature so
far--- for minimax, minisum, as well as quadratic mean social 
cost functions--- belong to this class of mechanisms. Next, we consider the case of 2 agents,
and show that the LRM mechanism provides the optimal approximation 
ratio among all randomized mechanisms (that satisfy certain mild assumptions) for this special case, for every $p \geq 1$. 
Our result for the special case of 2 agents also gives a lower
bound on the approximation ratio for all randomized mechanisms. We briefly discuss some directions for
further research in section 5.

\section{Model}
Let $N = \{1, 2, \ldots, n\}$, $n \geq 2$, be the set of agents. Each agent $i \in N$
reports a location $x_i \in \mathbb{R}$. 
A {\em deterministic} mechanism is a collection of functions $f = \{f_n | \ n \in \mathbb{N},  n \geq 2\}$ such that each $f_n:\mathbb{R}^n \rightarrow \mathbb{R}$
maps each location profile ${\bf x} = (x_1, x_2, \ldots, x_n)$ to the 
location of a facility. We will abuse notation and let $f({\bf x})$ denote $f_n({\bf x})$.
Under a similar notational abuse, a {\em randomized} mechanism is a collection of functions $f$ that maps each location profile to
a probability distribution over $\mathbb{R}$: if $f(x_1, x_2, \ldots, x_n)$ is
the distribution $\pi$, then the facility is located by drawing a single sample
from $\pi$. 

Our focus will be on deterministic and randomized mechanisms for
the problem of locating a single facility when the location of any agent is
{\em private} information to that agent and cannot be observed or otherwise
verified. It is therefore critical that the mechanism 
be {\em strategyproof}---it should be optimal for each agent $i$ to report
his {\em true} location $x_i$ rather than something else. To that end we 
assume that if the facility is located at $y$, an agent's disutility, 
equivalently cost, is simply his distance to $y$. Thus, an agent whose true
location is $x_i$ incurs a cost $C(x_i,y) =  |x_i - y|$. If the location 
of the facility
is random and according to a distribution $\pi$, then the cost of agent $i$
is simply $C(x_i, \pi) = \mathbb{E}_{y \sim \pi} |x_i - y|$, 
where $y$ is a random variable
with distribution $\pi$. The formal definition of strategyproofness is now:

\begin{definition}
A mechanism $f$ is strategyproof if for each $i \in N$, 
each $x_i, x'_i \in \mathbb{R}$, and for each
${\bf x_{-i}} = (x_1, x_2, \ldots, x_{i-1}, x_{i+1}, \ldots x_n) \in \mathbb{R}^{n-1}$,
$$C(x_i, f(x_i, {\bf x_{-i}})) \leq C(x_i, f(x'_i, {\bf x_{-i}})), $$ 
where $(\alpha, {\bf x_{-i}})$ denotes a vector with the $i$-th component being $\alpha$ and the $j$-th component being $x_j$ for all $j \neq i$.
\end{definition}

The class of strategyproof mechanisms is quite large: for example, locating
the facility at agent 1's reported location is strategyproof, but is not
particularly appealing because it fails almost every reasonable notion of
fairness and could also be highly ``inefficient''. To address these issues,
and to winnow the class of acceptable mechanisms, we impose 
additional requirements that  stem from efficiency or fairness
considerations. In this paper we assume that locating a facility at $y$
for the location profile is ${\bf x} = (x_1, x_2, \ldots, x_n)$ incurs the
{\em social cost} 
$$sc({\bf x},y) \; = \; \bigg( \sum_{i \in N} |x_i - y|^{p} \bigg)^{1/p},\;\;\; p \geq 1.$$
For a randomized mechanism $f$ that maps $x$ to a distribution $\pi$,
we define the social cost to be
$$sc({\bf x},\pi) \; = \; \mathbb{E}_{y \sim \pi} \bigg( \sum_{i \in N} |x_i - y|^{p} \bigg)^{1/p}.$$
For this definition of social cost, our goal now is to find a strategyproof
mechanism that does well with respect to minimizing the social cost. A natural
mechanism (and this is the approach taken in the classical literature on
facility location) is the ``optimal'' mechanism: each location profile 
${\bf x} = (x_1, x_2, \ldots, x_n)$ is mapped to $OPT({\bf x})$, defined as\footnote{
Strictly speaking, the mechanism is not well defined in cases where
the social cost at $x$ is minimized by multiple locations $y$, but we could
pick an exogenous tie-braking rule to deal with such cases.}
$$OPT{(\bf x}) \; \in \; \arg \min_{y \in \mathbb{R}} sc({\bf x},y).$$
This optimal mechanism is not strategyproof as shown in the following
example.

\paragraph{Example.} Suppose there are two agents located at the points
$0$ and $1$ respectively on the real line. If they report their locations
truthfully, the optimal mechanism will locate the facility at $y = 0.5$,
for any $p > 1$.
Assuming agent 2 reports $x_2 = 1$, if agent 1 reports $x'_1 = -1$ instead, the facility will be located at 0, which
is best for agent 1.

\bigskip

Given that strategyproofness and optimality cannot be achieved simultaneously,
it is necessary to find a tradeoff. In this paper we shall restrict ourselves
to strategyproof mechanisms that approximate the optimal social cost as best
as possible. The notion of approxmation that we use is standard in computer
science: an $\alpha$-approximation algorithm is one that is guaranteed to have
cost no more than $\alpha$ times the optimal social cost. Formally, the
approximation ratio of an algorithm $A$ is $\sup_{I} \{ A(I) / OPT(I) \},$
where the supremum is taken over all possible instances $I$ of the problem;
and $A(I)$ and $OPT(I)$ are, respectively, the costs incurred by algorithm $A$
and the optimal algorithm on the instance $I$.

Our goal then is to design strategyproof (deterministic or randomized)
mechanisms whose approximation ratio is as close to 1 as possible.

\section{The Median Mechanism}

For the location profile ${\bf x} = (x_1, x_2, \ldots, x_n)$, the median mechanism
is a deterministic mechanism that locates the facility at the ``median''
of the reported locations. The median is unique if $n$ is odd, but not when 
$n$ is even, so we need to be more specific in describing the mechanism. 
For odd $n$, say $n = 2k-1$ for some $k \geq 1$, 
the facility
is located at $x_{[k]}$, where $x_{[k]}$ is the $k$th largest component of
the location profile. For even $n$, say $n = 2k$, the ``median'' can be any point in the
interval $[x_{[k]}, x_{[k+1]}]$; to ensure strategyproofness, we need to pick either
$x_{[k]}$ or $x_{[k+1]}$, and as a matter of convention we take the
median to be $x_{[k]}$.
It is well known that the median mechanism
is strategyproof~\footnote{A classical paper of Moulin~\cite{moulin80} for a closely
related model shows that all deterministic strategyproof mechanisms are 
essentially generalized median mechanisms.}. 
Furthermore, the median mechanism is {\em anonymous}\footnote{In an anonymous mechanism,
the facility location is the same for two location profiles that are permutations
of each other.}. Thus we may assume, without loss of generality, that each agent
reports her location truthfully.

Our main result in this section is that, for any $p \geq 1$, the median 
mechanism uniformly achieves 
the best possible approximation ratio among all deterministic 
strategyproof mechanisms.
We start with two simple observations, which will be used repeatedly in the
proof of this main result.

\begin{lemma}
\label{l:jen}
For any real numbers $a, b, c$ with $a \leq b \leq c$, and any $p \geq 1$,
$$(c-a)^p \; \leq \;  2^{p-1} [(c-b)^p + (b-a)^p].$$
\end{lemma}
\proof
For any $p \geq 1$, $f(x)=x^p$ is a convex function on $[0,\infty)$, and so for 
any $\lambda \in [0,1]$ and $x, y \geq 0$, 
\begin{equation}
f(\lambda x + (1 - \lambda) y )) \; \leq \; \lambda f(x) + (1-\lambda) f(y).
\end{equation}
Setting $\lambda = 1/2$, $x = c-b$, and $y = b-a$, and multiplying both
sides of the inequality by $2^p$ gives the result.
\qed

\begin{lemma}
\label{l:bin}
For any non-negative real numbers $a$ and $b$, and any $p \geq 1$,
$$(a+b)^p \; \geq \; a^p + b^p.$$
\end{lemma}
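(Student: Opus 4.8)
The plan is to reduce the inequality to the normalized statement that $s^p + t^p \le 1$ whenever $s, t \ge 0$ and $s + t = 1$. First I would dispose of the degenerate case $a = b = 0$, where both sides equal $0$; in all other cases $a + b > 0$, so we may divide the desired inequality $(a+b)^p \ge a^p + b^p$ through by $(a+b)^p$. Setting $s = a/(a+b)$ and $t = b/(a+b)$, the claim becomes $1 \ge s^p + t^p$ with $s, t \in [0,1]$ and $s + t = 1$.

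The key step is the elementary bound $u^p \le u$ for $p \ge 1$ and $u \in [0,1]$: since $u^{p-1} \le 1$ for such $u$ and $p$, we get $u^p = u \cdot u^{p-1} \le u$. Applying this with $u = s$ and with $u = t$ and adding gives $s^p + t^p \le s + t = 1$, which is exactly what we need; multiplying back through by $(a+b)^p$ recovers the lemma. An alternative I might present instead is a one-variable calculus argument: fix $b \ge 0$ and set $g(a) = (a+b)^p - a^p - b^p$ on $[0,\infty)$; then $g(0) = 0$ and $g'(a) = p\big((a+b)^{p-1} - a^{p-1}\big) \ge 0$, because $t \mapsto t^{p-1}$ is nondecreasing on $[0,\infty)$ when $p \ge 1$, so $g(a) \ge g(0) = 0$ for all $a \ge 0$.

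There is no real obstacle here: the only point requiring even a moment's care is the degenerate case $a = b = 0$, where the normalization is not available and the inequality must be checked directly. Everything else follows in two lines from the subadditivity-type estimate $u^p \le u$ on the unit interval (equivalently, from monotonicity of $t \mapsto t^{p-1}$).
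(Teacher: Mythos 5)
Your proof is correct, and it takes a different route from the paper. The paper argues via the binomial theorem: for integer $p$, expanding $(a+b)^p$ immediately dominates $a^p + b^p$ since all cross terms are non-negative; it then asserts the same idea for rational $p$ and invokes continuity in $p$ to cover all real $p \geq 1$. Your argument instead normalizes by $(a+b)^p$ (after disposing of $a=b=0$) and reduces everything to the one-line estimate $u^p \leq u$ for $u \in [0,1]$, $p \geq 1$; your alternative via $g(a) = (a+b)^p - a^p - b^p$ with $g' \geq 0$ is equally valid. What your approach buys is that it handles all real $p \geq 1$ in one stroke, with no case split on integer versus non-integer exponents and no appeal to density or continuity --- arguably tighter than the paper's sketch, whose extension from integer to rational $p$ is left rather vague. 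What the paper's approach buys is brevity and a very concrete picture for the integer case (the discarded binomial cross terms), at the cost of the extra limiting argument for general $p$. Either way the lemma stands; no gap in your proposal.
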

\proof
For integer $p$, the result is a direct consequence of the binomial theorem;
the same argument covers the case of rational $p$ as well. Continuity implies
the result for all $p$.
\qed

\begin{thm}
Suppose there are $n$ agents with the location 
profile ${\bf x} = (x_1, x_2, \ldots, x_n)$. Define
the social cost of locating a facility at $y$ as
$(\sum_{i=1}^n{|y-x_i|^p})^\frac{1}{p}$ for $p \geq 1$. 
The social cost incurred by the median mechanism is at most 
$2^{1-\frac{1}{p}}$ times the
optimal social cost \footnote{When $p = \infty$, the median mechanism provides a 2-approximation, as shown in  Procaccia and
Tennenholtz ~\cite{pt13}.}.
\end{thm}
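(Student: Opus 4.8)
The plan is to compare the location $m$ chosen by the median mechanism directly with a social optimum $y^{*}$ (any minimizer of $sc({\bf x},\cdot)$) and to prove the pointwise bound
$\sum_{i\in N}|x_{i}-m|^{p}\le 2^{p-1}\sum_{i\in N}|x_{i}-y^{*}|^{p}$; raising both sides to the power $1/p$ then yields the claimed $2^{1-1/p}$ ratio. Assume without loss of generality that $m\le y^{*}$ (the case $m>y^{*}$ is entirely analogous, with left and right interchanged), and write $d=y^{*}-m\ge 0$ and $e_{i}=|x_{i}-y^{*}|$. Partition the agents as $P=\{i:x_{i}\le m\}$ and $Q=\{i:x_{i}>m\}$. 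The only property of the median I will use is the count $|P|\ge\lceil n/2\rceil\ge|Q|$, which is immediate from the definition of $m$ as the $k$-th largest reported location together with the even-$n$ tie-breaking convention.

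Since $|P|\ge|Q|$, I fix an injection $\sigma:Q\hookrightarrow P$ and show that the contribution of each matched pair $\{j,\sigma(j)\}$, and of each of the $|P|-|Q|$ unmatched agents of $P$, is at most $2^{p-1}$ times its contribution to $\sum_{i}e_{i}^{p}$; summing over pairs and leftovers then gives the displayed inequality because $\sigma$ is injective. For an unmatched $i\in P$ this is immediate, since $x_{i}\le m\le y^{*}$ forces $|x_{i}-m|=e_{i}-d\le e_{i}$. For a matched pair, $\sigma(j)\in P$ gives $e_{\sigma(j)}\ge d$ and $|x_{\sigma(j)}-m|=e_{\sigma(j)}-d$, while $j\in Q$ gives $x_{j}>m$, and I split on the position of $x_{j}$ relative to $y^{*}$. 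If $m<x_{j}\le y^{*}$ then $|x_{j}-m|=d-e_{j}\le d$, and applying Lemma~\ref{l:bin} to $e_{\sigma(j)}=(e_{\sigma(j)}-d)+d$ gives $|x_{j}-m|^{p}+|x_{\sigma(j)}-m|^{p}\le d^{p}+(e_{\sigma(j)}-d)^{p}\le e_{\sigma(j)}^{p}\le 2^{p-1}(e_{j}^{p}+e_{\sigma(j)}^{p})$. If instead $x_{j}>y^{*}$ then $|x_{j}-m|=e_{j}+d$, and Lemma~\ref{l:jen} gives $(e_{j}+d)^{p}\le 2^{p-1}(e_{j}^{p}+d^{p})$, after which it remains to check the elementary inequality $2^{p-1}d^{p}+(e_{\sigma(j)}-d)^{p}\le 2^{p-1}e_{\sigma(j)}^{p}$, valid whenever $0\le d\le e_{\sigma(j)}$ (for instance because $t\mapsto 2^{p-1}t^{p}+(1-t)^{p}$ is convex on $[0,1]$ and hence bounded by its endpoint values, whose maximum is $2^{p-1}$); combining again yields $|x_{j}-m|^{p}+|x_{\sigma(j)}-m|^{p}\le 2^{p-1}(e_{j}^{p}+e_{\sigma(j)}^{p})$.

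I expect the pairing to be the crux. The naive approach of bounding $|x_{i}-m|^{p}\le 2^{p-1}(e_{i}^{p}+d^{p})$ for $i\in Q$ and $|x_{i}-m|^{p}\le e_{i}^{p}-d^{p}$ for $i\in P$ and then summing directly leaves a residual $(2^{p-1}|Q|-|P|)d^{p}$, which is strictly positive once $p>1$ (already for $n=2$), so a cruder accounting does not close the bound. Matching each right-agent to a distinct left-agent is exactly what lets the surplus $e_{\sigma(j)}^{p}$ of the left-agent absorb the $d^{p}$-penalty incurred on the right, and verifying that this absorption still goes through in the subcase $x_{j}>y^{*}$ — where a factor $2^{p-1}$ has already been spent via Lemma~\ref{l:jen} — is the one spot that needs care. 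The constant cannot be improved, as the two-agent instance with agents at $0$ and $1$ shows: the median mechanism picks $1$, the optimum is $1/2$, and the ratio is exactly $2^{1-1/p}$.
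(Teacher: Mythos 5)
Your proof is correct and is essentially the paper's own argument: both match each agent lying strictly on the optimum's side of the median to a distinct agent on the other side and bound each pair's cost via Lemma~\ref{l:jen} and Lemma~\ref{l:bin} (with leftover agents handled by a trivial bound). The only cosmetic differences are that you use an arbitrary injection justified by the median's counting property where the paper uses the symmetric pairing $x_i \leftrightarrow x_{n+1-i}$ after sorting, and that in your second subcase you close with the endpoint-convexity bound $2^{p-1}t^p+(1-t)^p\le 2^{p-1}$ where the paper simply applies Lemma~\ref{l:bin} once more.
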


\proof
We may assume that $x_1 \leq ... \leq x_n$. Let $OPT$ be a facility location that 
minimizes the social cost, and let $m$ be the median. The inequality we need to
prove is
$$\sum_{i=1}^n{|m-x_i|^p} \; \leq \; 2^{p-1} \sum_{i=1}^n{|OPT-x_i|^p}.$$
We do this by pairing each location $x_i$ with its ``symmetric'' location
$x_{n+1-i}$ and arguing that the total cost of these two locations in
the median mechanism is within the required bound of their total cost in
an optimal solution. For even $n$, this completes the argument; for odd $n$
the only location without such a pair is the median itself, which incurs {\em zero}
cost in the median mechanism, and so the argument is complete. 
Formally, the result follows if we can show
$$|m-x_i|^p+|x_{n+1-i}-m|^p \leq 2^{p-1}(|OPT-x_i|^p+|OPT-x_{n+1-i}|^p), \;\;\;
{\rm for all} \;\; i \leq \lfloor n/2 \rfloor.$$

We consider two cases, depending on whether $OPT$ is in the interval $[x_i, x_{n+1-i}]$
or not. In each of these cases, $OPT$ may be above the median or below, but the proof
remains identical in each subcase, so we give only one.

\begin{enumerate}
\item[1.] $x_i \leq m \leq OPT \leq x_{n+1-i}$ or $x_i \leq OPT \leq m \leq x_{n+1-i}$. 
We will prove the first of these subcases; the proof of the second is identical. 
Applying Lemma~\ref{l:jen} by setting $a = m$, $b = OPT$, and $c = x_{n+1-i}$, we get
$$ |x_{n+1-i}-m|^p \leq 2^{p-1}(|x_{n+1-i}-OPT|^p+|OPT-m|^p).$$
Thus,
\[%
\begin{split}
|m-x_i|^p+|x_{n+1-i}-m|^p & \leq |m-x_i|^p + 2^{p-1}(|x_{n+1-i}-OPT|^p+|OPT-m|^p) \\
& \leq 2^{p-1}(|m-x_i|^p + |x_{n+1-i}-OPT|^p+|OPT-m|^p) \\
& \leq 2^{p-1}(|x_{n+1-i}-OPT|^p+|OPT-x_i|^p),
\end{split}
\]
where the last inequality is obtained by applying Lemma~\ref{l:bin} to the terms $|m-x_i|^p$ and $|OPT-m|^p$.

\item[2.] $OPT \leq x_i \leq m \leq x_{n+1-i}$ or $x_i \leq m \leq x_{n+1-i} \leq OPT$. 
Again, we prove only the first subcase. Note that
\[%
\begin{split}
|x_{n+1-i}-m|^p + |m-x_{i}|^p &\leq |x_{n+1-i} - x_{i}|^p \\
& \leq  |OPT -x_{n+1-i}|^p \\
& \leq 2^{p-1}(|OPT - x_{i}|^p + |OPT - x_{n+1-i}|^p)
\end{split}
\]
where the first inequality follows from Lemma~\ref{l:bin}. (Note that Lemma~\ref{l:jen}
is not used in the proof of this case.)
\end{enumerate}

We end this section by showing that {\em no} deterministic and strategyproof 
mechanism can give a better approximation to the social cost.

\begin{lemma}
\label{l:2agents}
Consider the case of two agents and suppose the location profile is $(x_1, x_2)$
with $x_1 < x_2$. 
For $p \geq 1$, suppose
the social cost of locating a facility at $y$ is $(|x_1 - y|^p + |x_2 - y|^p)^{1/p}$.
Any determinstic mechanism whose approximation ratio is better than $2^{1-\frac{1}{p}}$ 
for $p > 1$  must locate the facility at $y$ for some $y \in (x_1, x_2)$.
\end{lemma}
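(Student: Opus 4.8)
The plan is to argue by contradiction: suppose a deterministic strategyproof mechanism $f$ locates the facility outside $(x_1,x_2)$ on some profile $(x_1,x_2)$ with $x_1 < x_2$, and show its approximation ratio on that profile is at least $2^{1-\frac{1}{p}}$. By the symmetry of the problem (relabeling agents and reflecting the line) it suffices to treat the case $f(x_1,x_2) = y \le x_1$; the case $y \ge x_2$ is handled identically. First I would record the value of the optimal social cost on this profile. Writing $d = x_2 - x_1$, the function $y \mapsto |x_1-y|^p + |x_2-y|^p$ is minimized in the interior of $[x_1,x_2]$ (at the midpoint, by symmetry and strict convexity for $p>1$), giving $OPT(x_1,x_2)^p = 2 (d/2)^p = d^p/2^{p-1}$, i.e. $OPT(x_1,x_2) = d \cdot 2^{1/p - 1}$.

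Next I would bound the cost of $f$ at $y \le x_1$ from below. Since $|x_1 - y| \ge 0$ and $|x_2 - y| = (x_2 - x_1) + (x_1 - y) \ge d$, we get $sc((x_1,x_2),y) = (|x_1-y|^p + |x_2-y|^p)^{1/p} \ge d$. Hence the ratio on this profile is at least $d / (d \cdot 2^{1/p-1}) = 2^{1 - 1/p}$, so the mechanism's approximation ratio is \emph{not} better than $2^{1-1/p}$ — contradiction. This already gives the statement, so in fact no appeal to strategyproofness is needed for the conclusion as literally stated; the hypothesis of strategyproofness is presumably there because the lemma will be combined with strategyproofness in the subsequent theorem (to pin down \emph{which} interior point $f$ must choose and to propagate the constraint across profiles). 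I would remark on this but keep the proof as the clean two-line estimate above.

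The only place requiring a little care — the ``main obstacle,'' such as it is — is the boundary/degenerate behavior: confirming that for $p > 1$ the optimum is strictly interior (so the claimed strict improvement ``better than $2^{1-1/p}$'' is the right target and is unattained at the endpoints), and making sure the argument is stated so that it covers $y = x_1$ and $y = x_2$ themselves, not just $y$ strictly outside $[x_1,x_2]$ — since at $y = x_1$ we still get $sc = d < d\cdot 2^{1-1/p} \cdot 2^{1/p-1}\cdot$… wait, at $y=x_1$ the ratio is exactly $d/(d\,2^{1/p-1}) = 2^{1-1/p}$, which is still not \emph{better} than $2^{1-1/p}$, so the endpoints are correctly excluded. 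For $p = 1$ the optimum is the whole interval $[x_1,x_2]$ and the bound $2^{1-1/p} = 1$ is trivially tight everywhere, which is why the statement restricts to $p > 1$. I would close by noting the clean takeaway: any deterministic mechanism beating the median's ratio must be ``responsive'' in the sense of placing the facility strictly between the two reported points.
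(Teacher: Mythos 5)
Your proposal is correct and follows essentially the same route as the paper: compute the optimal cost $d\,2^{1/p-1}$ at the midpoint (using strict convexity for $p>1$), observe that any location at or outside the endpoints has cost at least $d$, and conclude the ratio there is at least $2^{1-1/p}$, so a strictly better mechanism must choose an interior point. Your observation that strategyproofness plays no role in this lemma (only in the subsequent theorem) is also consistent with the paper's proof.
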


\proof
The function $f(y) = |y-x_1|^p + |y-x_2|^p$ is strictly convex, and its 
unique minimizer is $y^* = (x_1 + x_2) / 2$, with the corresponding value 
$f(y^*) = |x_2 - x_1|^p / 2^{p-1}$.
Moreover $f(x_1) = f(x_2) = |x_2-x_1|^p = 2^{p-1}f(y^*)$. 
It follows that for the deterministic mechanism to do strictly better than
the stated ratio, the facility cannot be located at the reported locations;
locating the facility to the left of $x_1$ or to the right of $x_2$ only
increases the cost of the mechanism, so the only option left for a mechanism
to do better is to locate the facility in the interior, i.e., in $(x_1, x_2)$.
\qed

\begin{thm}
Any strategyproof deterministic mechanism has an approximation ratio of at 
least $2^{1-\frac{1}{p}}$ for the $L_p$ social cost function for any $p \geq 1$\footnote{The lower bound of $2$ on the approximation ratio also holds when $p = \infty$, see Procaccia and
Tennenholtz ~\cite{pt13}.}.
\end{thm}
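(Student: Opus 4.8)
\textit{Proof proposal.} The plan is to reduce to the two-agent case and combine Lemma~\ref{l:2agents} with strategyproofness via a single one-step deviation. For $p=1$ the claimed bound is $2^{1-1/p}=1$, which holds trivially for any mechanism, so assume $p>1$ throughout.

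Suppose, for contradiction, that $f$ is a strategyproof deterministic mechanism with approximation ratio strictly less than $2^{1-1/p}$. The approximation ratio is a supremum over all instances, in particular over two-agent instances, and the two-agent restriction $f_2$ is itself strategyproof; hence $f_2$ is a strategyproof two-agent mechanism with ratio strictly below $2^{1-1/p}$, so Lemma~\ref{l:2agents} applies. Applying it to the profile $(0,1)$ forces the facility to be placed at some $y_1\in(0,1)$. Since $y_1<1$, the profile $(y_1,1)$ still consists of two distinct reported locations, so Lemma~\ref{l:2agents} applies once more and forces the facility to be placed at some $y_2\in(y_1,1)$; in particular $y_2>y_1$.

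Now consider agent~$1$ with \emph{true} location $y_1$, the other agent reporting $1$. Reporting truthfully places the facility at $y_2$, giving agent~$1$ cost $y_2-y_1>0$. Misreporting $0$ makes the mechanism see the profile $(0,1)$ and locate the facility at $y_1$, giving agent~$1$ cost $0$. Thus truthful reporting is strictly dominated for agent~$1$, contradicting Definition~1. Hence no strategyproof deterministic mechanism can beat $2^{1-1/p}$, which together with the upper bound from the first theorem of this section shows the median mechanism is optimal among deterministic strategyproof mechanisms.

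I do not expect a genuine obstacle here: Lemma~\ref{l:2agents} already carries all the quantitative content, and strategyproofness is used only qualitatively. The one point needing a line of care is the second invocation of Lemma~\ref{l:2agents}, where one must note that the relocated profile $(y_1,1)$ still has distinct coordinates — immediate from $y_1\in(0,1)$ — and the one observation worth stating explicitly is that the behavior of $f$ on profiles with more than two agents is irrelevant, since a worst-case lower bound only needs the single family of two-agent instances above.
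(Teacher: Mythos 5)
Your proposal is correct and follows essentially the same route as the paper's proof: a double application of Lemma~\ref{l:2agents} followed by a single profitable one-step misreport that contradicts strategyproofness (the paper has agent~2 at $y$ misreport $1$ on the profile $(0,y)$, whereas you have agent~1 at $y_1$ misreport $0$ on the profile $(y_1,1)$ --- a mirror-image of the same argument). The added remarks about restricting to two-agent instances and the distinctness of the second profile are fine and only make explicit what the paper leaves implicit.
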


\proof
The bound holds trivially for $p = 1$. Suppose $p > 1$, and suppose a deterministic
strategyproof mechanism yields an approximation ratio strictly better than 
$2^{1-\frac{1}{p}}$ to the $L_p$ social cost. For the two-agent location profile
$x_1 = 0, x_2 = 1$, 
Lemma~\ref{l:2agents} implies the facility is located at some $y \in (0,1)$.
Now consider the location profile $x_1 = 0, x_2 = y$. Again, by Lemma~\ref{l:2agents},
the mechanism locate the facility at $y' \in (0,y)$ to guarantee the improved
approximation. But if agent 2 is located at $y < 1$, he can misreport his location as
$1$, forcing the mechanism to locate the facility at $y$, his true location; this
violates strategyproofness.
\qed

\section{Randomized Mechanisms}

Recall that when the social cost is measured by the $L_{2}$ norm or the $L_{\infty}$
norm, randomization provably improves the approximation ratio. In the former case,
Feldman and Wilf~\cite{fw13} describe an algorithm whose approximation ratio is 
$(\sqrt{2}+1)/2$; for the latter, Procaccia and Tennenholtz~\cite{pt13} design an algorithm 
with an approximation ratio of $3/2$. The mechanisms in both cases are simple and somewhat
similar, placing non-negative probabilities \emph{only} on the optimal and reported locations, where these probabilities are independent of the reported location profile. There are two reasonable ways of choosing the reported locations: one is via dictatorships and the other is via generalized medians. In this section we show that neither is enough; namely, randomizing over dictatorships, generalized medians and the optimal location  
does not improve the approximation ratio of the median mechanism for {\em any} integer $p \in (2, \infty)$. 
For the case of 2 agents we show that the best approximation ratio is given by
the LRM mechanism among all strategyproof mechanisms. Extending this analysis
even to the case of 3 agents appears to be non-trivial.

\subsection{Mixing Dictatorships and Generalized Medians with the Optimal Location}

\begin{thm}
Suppose we are given non-negative numbers ${p_j^n}'$, ${p_j^n}''$, and $p_{OPT}^n$ with
$p_{OPT}^n+\sum_{j \in N}{p_j^n}'+\sum_{j \in N}{p_j^n}''=1$ for each $n$.
For the problem with $n$ agents and reported profile $(x_1, x_2, \ldots, x_n)$
let $f$ be the strategyproof randomized mechanism that locates the facility at $OPT$ with probability $p_{OPT}^n$, at $x_j$ with probability ${p_j^n}'$, and at $x_{[j]}$ with probability ${p_j^n}''$\footnote{When a location appears more than once in $x_1,\ldots,x_n,x_{[1]},\ldots,x_{[n]},OPT$, the probabilities add up.}, where $OPT$ is the optimal location for the profile $(x_1, x_2, \ldots, x_n)$.
Then, for any finite integer $p>2$, the approximation ratio of $f$ is at least $2^{1-\frac{1}{p}}$.
\end{thm}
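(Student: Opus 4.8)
The plan is to construct a family of instances, parametrized by the number of agents $n$, on which any mechanism $f$ of the stated form performs no better than $2^{1-1/p}$ in the limit. The natural candidate is a symmetric two-point configuration: place $\lceil n/2 \rceil$ agents at $0$ and $\lfloor n/2 \rfloor$ agents at $1$ (or, to keep $OPT$ and all order statistics pinned down cleanly, place $n-1$ agents at $0$ and one agent at $1$, or some similar split to be optimized at the end). On such an instance every reported location $x_j$, every order statistic $x_{[j]}$, and the optimal location $OPT$ all lie in $\{0,1\}$ together with possibly one interior point; the key point is that the support of $f$ collapses to at most three points whose locations do not depend on fine perturbations. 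One then computes the social cost of $f$ as a convex combination $q\cdot sc(\mathbf{x},0) + r\cdot sc(\mathbf{x},1) + s\cdot sc(\mathbf{x},OPT)$ where $q,r,s$ are the aggregated probabilities, and compares to $OPT$'s social cost.

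The main work is to show that the interior point $OPT$ is also asymptotically bad. For the split with $\alpha n$ agents at $0$ and $(1-\alpha)n$ at $1$, the optimal facility location $y^*$ minimizes $\alpha n\, y^p + (1-\alpha)n\,(1-y)^p$; as $p\to\infty$ this pushes $y^*$ toward $1/2$, and more importantly $sc(\mathbf{x},y^*)^p = n\big(\alpha (y^*)^p + (1-\alpha)(1-y^*)^p\big)$, while $sc(\mathbf{x},0)^p = (1-\alpha)n$ and $sc(\mathbf{x},1)^p = \alpha n$. The first step is therefore to pin down $y^*$ (first-order condition $\alpha (y^*)^{p-1} = (1-\alpha)(1-y^*)^{p-1}$) and to estimate $sc(\mathbf{x},y^*)/sc(\mathbf{x},0)$ and $sc(\mathbf{x},y^*)/sc(\mathbf{x},1)$. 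The second step is to observe that the approximation ratio of $f$ on this instance is at least $\big(q\cdot sc(0)+r\cdot sc(1)+s\cdot sc(y^*)\big)/sc(y^*)$, and that by strategyproofness-agnostic averaging this is minimized, over the probability weights, at the vertex that puts all weight on whichever of $0,1,y^*$ is cheapest — but $f$ cannot put all its weight there for every instance, because the two instances with the roles of $0$ and $1$ swapped force $q$ and $r$ to play symmetric roles. Playing the instance and its mirror image against each other forces $\max$ of the two ratios to be at least the value obtained when $q=r$, and then one optimizes over $\alpha$ (or over the $0/1$ split) and lets $p$ or $n$ grow.

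Concretely, the key steps in order: (1) fix the two-point (or near-two-point) instance and its mirror image, and write down $sc(\mathbf{x},\cdot)$ at the three relevant points; (2) since the mechanism's probabilities are instance-independent within a fixed $n$, average the approximation ratio over the instance and its mirror to eliminate the asymmetry between the coefficients of $sc(0)$ and $sc(1)$, reducing to a bound involving only $s=p_{OPT}^n$ and the common weight on the extremes; (3) bound $sc(\mathbf{x},0)/sc(\mathbf{x},y^*)$ and $sc(\mathbf{x},1)/sc(\mathbf{x},y^*)$ from below — for the balanced split these are both close to $2^{(p-1)/p}=2^{1-1/p}$ when $p$ is large — and conclude that the averaged ratio is at least $2^{1-1/p}$ up to an error that vanishes as $n\to\infty$; (4) take $n\to\infty$ to kill the error terms.

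The hard part will be step (3): controlling $sc(\mathbf{x},y^*)$ precisely enough. Unlike the $L_\infty$ case, here $OPT$ is strictly interior and its cost is a genuine $p$-norm of a long vector, so one needs a clean lower bound on $sc(\mathbf{x},0)/sc(\mathbf{x},y^*) = \big((1-\alpha)/(\alpha (y^*)^p + (1-\alpha)(1-y^*)^p)\big)^{1/p}$ that survives as $n\to\infty$ with $p$ fixed. I expect this reduces to showing that for fixed integer $p>2$ there is a split $\alpha$ (possibly depending on $p$, possibly just $\alpha=1/2$ with $n$ even) for which this quantity, together with its mirror, forces the averaged ratio above $2^{1-1/p}-\epsilon$ for every $\epsilon>0$ once $n$ is large; balancing the two mirror ratios and then checking the resulting one-variable inequality in $y^*$ (equivalently in $\alpha$) is the only real computation, and it is exactly the point where integrality of $p$ and $p>2$ are used to guarantee the inequality is not merely asymptotic but strict for large finite $n$.
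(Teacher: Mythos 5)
There is a genuine gap, and it is exactly at the heart of the theorem. Your entire argument lives on the two-point instance (and its mirror image), and on that instance the location $OPT=y^*$ is available to the mechanism with probability $p_{OPT}^n$: a mechanism that puts weight close to $1$ on $OPT$ has ratio close to $1$ there, and the mirror trick does nothing to prevent this, because $OPT$ is mapped to $OPT$ under mirroring — your averaging only symmetrizes the weights $q,r$ on the extremes, it places no constraint whatsoever on $s=p_{OPT}^n$. So the inequality you hope to extract in steps (2)--(3) is simply false for weight vectors with large $s$, and nothing in your outline rules those out. The only way to rule them out is to use strategyproofness, which your proposal explicitly never invokes ("strategyproofness-agnostic averaging"); but strategyproofness gives no constraints at all when you only ever look at a single symmetric profile and its mirror, since no profitable deviation is exhibited. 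This is why the paper's proof spends almost all of its effort elsewhere: it constructs, for each $j\le k$, an auxiliary asymmetric profile ${\bf x}^j$ (with blocks at $-a_j$, $0$, $1$, $1+a_j$, and $a_j$ chosen as a root of an explicit polynomial so that $OPT=0$), pairs it with the profile in which agent $j$ sits at $0$ (where $OPT=1/2$), and uses strategyproofness of that single deviation to derive $p_j^n \ge p_{OPT}^n/(2a_j)$. Summing these constraints and proving $\sum_{j=1}^k 1/a_j \to \infty$ (via the bound $a_j < 2^{p-1}(j-1)$ for $j \gtrsim k^{1/(p-1)}$, which is where integrality of $p$ and $p>2$ actually enter, through a binomial expansion and a harmonic-sum estimate) forces $p_{OPT}^n \to 0$. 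Only then does the balanced $0/1$ profile — the easy part, where $y^*=1/2$ and the cost computations you outline in step (3) are immediate — yield a ratio of $2^{1-\frac1p}-(2^{1-\frac1p}-1)p_{OPT}^n \to 2^{1-\frac1p}$.

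In short: your step (3) is not the hard part (for the balanced split it is a one-line computation), and your step (2) cannot work as stated. The missing idea is a family of profiles on which a misreport by a single agent moves $OPT$ by a macroscopic amount while moving that agent's own reported location only slightly, so that strategyproofness converts a large $p_{OPT}^n$ into large dictatorial/median weights $p_j^n$ for many $j$, which is incompatible with the weights summing to one as $n\to\infty$. Without some construction of this kind (the paper's ${\bf x}^j$ profiles, or an equivalent), the claimed lower bound cannot be reached, since the hypothesis of the theorem explicitly allows mass on $OPT$ and your instances are precisely the ones on which that mass is harmless.
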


\begin{proof}
Fix $n=2k$, with $k \in \mathbb{N}$. For notational convenience, let $p_j^n={p_j^n}'+{p_j^n}''$ (we will use this to analyze profiles where $x_j=x_{[j]}$). For $j=1,\ldots,k$, define the profile ${\bf x}^j$ as follows (where $a_j$ is a parameter to be defined shortly): agents $1$ through $j$ are located at $-a_j$; agents $j+1$ through $k$ are located at $0$; agents $k+1$ through $2k-j+1$ are located at $1$; and agents $2k-j+2$ through $2k$ are located at $1+a_j$. (Note the {\em slight} asymmetry in the location of the agents: while $k$ agents are at or below zero, and $k$ agents are at or above $1$, there is an additional agent at 1 compared to zero and so one less agent at $1+a_j$ compared to $-a_j$.)
Now, $a_j$ is chosen to be the smallest positive root of the function $g_j(\alpha)=j\alpha^{p-1}-(k-j+1)-(j-1)(1+\alpha)^{p-1}$; such an $a_j$ must exist by the intermediate value theorem, as $g_j(0) < 0$ and $g_j(\alpha)$  is a continuous function of $\alpha$ with $g_j(\alpha) \rightarrow \infty$ as $\alpha \rightarrow \infty$.

We show that the optimal mechanism locates the facility at zero for the profile ${\bf x}^j$, i.e., $OPT = 0$. Note that the social cost for this profile, when locating the facility at $z \in [0,1]$, is $j(z+a_j)^p+(k-j)z^p+(k-j+1)(1-z)^p+(j-1)(1+a_j-z)^p$, and when $z \in (-a_j,0)$ the social cost becomes $j(z+a_j)^p+(k-j)(-z)^p+(k-j+1)(1-z)^p+(j-1)(1+a_j-z)^p$. Note that the social cost function is differentiable for $z \in (0,1)$ and for $z\in (-a_j,0)$. The left and right derivatives at $0$ are both $pja_j^{p-1}-p(k-j+1)-p(j-1)(1+a_j)^{p-1}$, and thus the social cost function is differentiable on $(-a_j,1)$ with its derivative at $z=0$ equal to zero (by our choice of $a_j$). The social cost function can also easily be verified to be twice differentiable on $(-a_j,1)$, with a positive second derivative at $z=0$, and thus we have a local minimum at $z=0$. The fact that this is a global minimum now follows from strict convexity of the social cost function $||{\bf x}^j-z(1,\ldots,1)||_p$ (for all $z \in \mathbb{R}$). Thus, indeed, $OPT=0$.

We now attempt to bound $p_{OPT}$. For each profile ${\bf x}^j$, consider the profile ${{\bf x}^j}'$ that differs only in the location of agent $j$: namely, ${x_j^j}'=0$ instead of $-a_j$. Note that on this profile, $OPT=0.5$ by symmetry. Strategyproofness implies that a deviation from profile ${{\bf x}^j}'$ to profile ${\bf x}^j$ should not be beneficial for agent $j$, namely $a_jp_j^n-\frac{1}{2}p_{OPT}^n \geq 0$ (where $a_j$ is the increase in agent $j$'s cost caused by that deviation when the facility is built in his reported location, and $\frac{1}{2}$ is the decrease in his cost caused by that deviation when the facility is located at $OPT$), which implies $p_j^n \geq \frac{p_{OPT}^n}{2a_j} $. Defining $a_j$ for $j=k+1,\ldots,2k$ in a symmetric fashion, we see that the same inequality holds for $j$ in that range, and that $a_j=a_{2k-j+1}$. Summing those inequalities up, we get:

$$1-p_{OPT}^n=\sum_{j=1}^{2k} p_j^n \geq \sum_{j=1}^{2k} \frac{p_{OPT}^n}{2a_j} =2\sum_{j=1}^k \frac{p_{OPT}^n}{2a_j} =\sum_{j=1}^k\frac{p_{OPT}^n}{a_j} $$

$$p_{OPT}^n \leq \frac{1}{1+\sum_{j=1}^k \frac{1}{a_j}}$$

Now, we claim it is enough to show that as $n \rightarrow \infty$ (or equivalently, as $k \rightarrow \infty$), $\sum_{j=1}^k \frac{1}{a_j} \rightarrow \infty$. The inequality then implies that $p_{OPT}^n \rightarrow 0$. Consider the profile which locates $k$ agents at $0$ and $k$ agents at $1$. The social cost of locating the facility at $OPT$ on this profile is $\sqrt[p]{n}/2$, while the social cost of locating the facility at an agent's location is $\sqrt[p]{n}2^{-\frac{1}{p}}$; thus, the approximation ratio of $f$ on this profile is $\frac{p_{OPT}^n \sqrt[p]{n}/2+(1-p_{OPT}^n)\sqrt[p]{n}2^{-\frac{1}{p}}}{\sqrt[p]{n}/2}=2^{1-\frac{1}{p}}-(2^{1-\frac{1}{p}}-1)p_{OPT}^n$. Thus, as $n \rightarrow \infty$, the approximation ratio on these profiles approaches $2^{1-\frac{1}{p}}$, completing the proof.\\

We are left with the task of showing that $\lim_{k \rightarrow \infty} \sum_{j=1}^k \frac{1}{a_j}=\infty$ . To do so, we first show that for $j \geq k^\frac{1}{p-1}+1$, $2^{p-1}(j-1) > a_j$. Recall that $a_j$ was defined as the smallest positive root of $g_j(\alpha)$, and that $g_j(0)<0$. Thus, it is enough to show that for $j$ in the appropriate range, $g_j(2^{p-1}(j-1))>0$. For notational convenience, we denote $Q=2^{p-1}$.

\[
\begin{split}
g_j(Q(j-1)) &= jQ^{p-1}(j-1)^{p-1}-(k-j+1)-(j-1)(1+Q(j-1))^{p-1} \\
&= Q^{p-1}(j-1)^{p-1}-k-(j-1)\sum_{i=1}^{p-2}\binom{p-1}{i}(Q(j-1))^{p-1-i} \\
&\geq Q^{p-1}(j-1)^{p-1}-(j-1)^{p-1}-(j-1)\sum_{i=1}^{p-2}\binom{p-1}{i}(Q(j-1))^{p-1-i} \\
&> Q^{p-1}(j-1)^{p-1}-(j-1)^{p-1}-(j-1)\sum_{i=1}^{p-2}\binom{p-1}{i}(Q(j-1))^{p-2} \\
&> Q^{p-1}(j-1)^{p-1}-(j-1)\sum_{i=1}^{p-1}\binom{p-1}{i}(Q(j-1))^{p-2} \\
&= Q^{p-1}(j-1)^{p-1}-(j-1)(Q(j-1))^{p-2}\sum_{i=1}^{p-1}\binom{p-1}{i} \\
&> Q^{p-1}(j-1)^{p-1}-(j-1)(Q(j-1))^{p-2}2^{p-1}= 0. \\
\end{split}
\]

Now,

\[
\begin{split}
\lim_{k \rightarrow \infty} \sum_{j=1}^k \frac{1}{a_j}&>\lim_{k \rightarrow \infty} \sum_{j=\lceil{}k^\frac{1}{p-1}+1\rceil{}}^k \frac{1}{2^{p-1}j} \\
&=\frac{1}{2^{p-1}}\lim_{k \rightarrow \infty} \sum_{j=\lceil{}k^\frac{1}{p-1}+1\rceil{}}^k \frac{1}{j} \\
&\geq \frac{1}{2^{p-1}}\lim_{k \rightarrow \infty} \int_{k^\frac{1}{p-1}+2}^k \frac{1}{t} dt \\
&=\frac{1}{2^{p-1}}(\lim_{k \rightarrow \infty} \int_{k^\frac{1}{p-1}}^k \frac{1}{t} dt-\lim_{k \rightarrow \infty}\int_{k^\frac{1}{p-1}}^{k^\frac{1}{p-1}+2}\frac{1}{t}dt) \\
&=\frac{1}{2^{p-1}}((\lim_{k \rightarrow \infty}(1-\frac{1}{p-1})\ln{k})-0)=\infty \\
\end{split}
\]

which completes our proof.
\end{proof}

\subsection{Optimality of the LRM Mechanism for 2 Agents}
Our next result shows that the LRM mechanism provides the best possible approximation ratio among all shift and scale invariant (defined below) startegyproof mechanisms for the case of $2$ agents for all $L_p$ social cost functions for $p \geq 1$.

We begin with some definitions: we say that a mechanism $f$ is \emph{shift and scale invariant} if for every location profile ${\bf x}=\{x_1,x_2\}$ s.t. $x_1 \leq x_2$ and every $c \in \mathbb{R}$, the following two properties are satisfied:
\begin{enumerate}
\item $f(\{x_1+c,x_2+c\})=f({\bf x})+c$.
\item When $c \geq 0$, we have $f(\{cx_1,cx_2\})=cf({\bf x})$, and when $c < 0$, we have $f(\{cx_1,cx_2\})=-cf(\{-x_2, -x_1\})$.
\end{enumerate}
A convenient notation for a given location profile ${\bf x}$ is to denote its midpoint as $m_{\bf x}=\frac{x_1+x_2}{2}$. We say that a mechanism $f$ is \emph{symmetric} if for any location profile ${\bf x}$ and for any $y \in \mathbb{R}$, $\mathds{P}(f({\bf x}) \geq m_{\bf x} + y) = \mathds{P}(f({\bf x}) \leq m_{\bf x} - y)$.

The following lemma allows us to convert any strategyproof mechanism into a symmetric mechanism.

\begin{lemma}
Given any strategyproof mechanism, there exists another symmetric strategyproof mechanism with the same approximation ratio.
\end{lemma}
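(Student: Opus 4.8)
The plan is to symmetrize a given strategyproof mechanism $f$ by averaging it with its mirror image. Concretely, for a profile ${\bf x} = \{x_1, x_2\}$ with midpoint $m_{\bf x}$, define the ``reflected'' mechanism $\bar f$ by the rule that $\bar f({\bf x})$ has the distribution of $2m_{\bf x} - f({\bf x})$ — i.e.\ reflect the output distribution of $f$ about the midpoint of the profile. Then let $g$ be the mechanism that, on input ${\bf x}$, tosses a fair coin and runs $f$ with probability $1/2$ and $\bar f$ with probability $1/2$. I would then verify three things: that $g$ is symmetric, that $g$ is strategyproof, and that $g$ has the same approximation ratio as $f$.

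Symmetry of $g$ is essentially immediate from the construction: for any $y$, the event $\{g({\bf x}) \geq m_{\bf x} + y\}$ has probability $\tfrac12 \mathds{P}(f({\bf x}) \geq m_{\bf x}+y) + \tfrac12 \mathds{P}(2m_{\bf x}-f({\bf x}) \geq m_{\bf x}+y)$, and the second term equals $\tfrac12\mathds{P}(f({\bf x}) \leq m_{\bf x}-y)$; applying the same decomposition to $\{g({\bf x}) \leq m_{\bf x}-y\}$ gives the matching expression, so the two probabilities agree. For the approximation ratio, the key observation is that the social cost function $sc({\bf x}, \cdot)$ for two agents is symmetric about $m_{\bf x}$ (since $\{x_1,x_2\}$ is symmetric about its own midpoint), so $\mathbb{E}[sc({\bf x}, \bar f({\bf x}))] = \mathbb{E}[sc({\bf x}, f({\bf x}))]$, and hence $sc({\bf x}, g({\bf x})) = sc({\bf x}, f({\bf x}))$ for every profile; since $OPT({\bf x})$ is unchanged, the ratio is identical on every instance, so the suprema coincide.

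The main obstacle is verifying strategyproofness of $g$, and this is where the argument needs care: reflecting about $m_{\bf x}$ is dangerous because $m_{\bf x}$ itself depends on the reported profile, so when agent $i$ deviates the reflection point moves too. I would handle this by showing $\bar f$ is itself strategyproof, from which strategyproofness of $g$ follows since a convex combination of strategyproof mechanisms is strategyproof (the cost $C(x_i, \cdot)$ is linear in the output distribution). To show $\bar f$ is strategyproof, fix agent $1$ (the case of agent $2$ is symmetric) with true location $x_1$, fix $x_2$, and compare the report $x_1$ against a deviation $x_1'$. Writing $m = (x_1+x_2)/2$ and $m' = (x_1'+x_2)/2$, one has $C(x_1, \bar f(x_1,x_2)) = \mathbb{E}|x_1 - (2m - f(x_1,x_2))| = \mathbb{E}|f(x_1,x_2) - (2m - x_1)| = \mathbb{E}|f(x_1,x_2) - x_2| = C(x_2, f(x_1,x_2))$; similarly $C(x_1, \bar f(x_1',x_2)) = \mathbb{E}|x_1 - (2m' - f(x_1',x_2))| = \mathbb{E}|f(x_1',x_2) - (x_1 + x_2 - x_1')|$. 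So the required inequality $C(x_1, \bar f(x_1,x_2)) \leq C(x_1, \bar f(x_1',x_2))$ becomes $C(x_2, f(x_1,x_2)) \leq \mathbb{E}|f(x_1',x_2) - (x_1+x_2-x_1')|$. Now interpret the right side as the cost agent $2$ (truthfully at $x_2$) would incur under $f$ if the \emph{other} agent reported $x_1+x_2-x_1'$ instead of $x_1$ — but this is not literally a deviation by agent $2$. The clean fix is to instead observe that $\bar f$ is exactly $f$ composed with the order-reversing relabeling: running $\bar f$ on $\{x_1,x_2\}$ and reading off agent $1$'s outcome is the same as running $f$ on the reflected profile and reading off the outcome for the reflected agent. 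Since the paper's model has anonymous symmetry in structure (both agents have identical cost form), one shows that the mechanism $f^R$ defined by $f^R(x_1,x_2) := -f(-x_2,-x_1)$ (reflect inputs through $0$, reflect output back) is strategyproof whenever $f$ is — a direct substitution in the strategyproofness inequality — and then notes that $\bar f$ agrees with a shift-conjugate of $f^R$; strategyproofness is preserved under input/output shifts by the same substitution trick. Composing these two elementary invariances gives strategyproofness of $\bar f$, hence of $g$, completing the proof. I would present the substitution verifications for $f^R$ and the shift-conjugate explicitly, as that is the one genuinely non-mechanical step.
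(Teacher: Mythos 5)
Your construction (reflect the output distribution about $m_{\bf x}$, then mix half--half) is exactly the paper's mirror-mechanism argument, and your symmetry and approximation-ratio steps are correct and match the paper's. The genuine gap is the strategyproofness of $\bar f$, and your ``clean fix'' does not close it: $\bar f(x_1,x_2)$ has the distribution of $x_1+x_2-f(x_1,x_2)$, whereas a shift-conjugate of $f^R$ by a constant $c$ is $-f(c-x_2,c-x_1)+c$; to make these agree you are forced to take $c=x_1+x_2$, a \emph{report-dependent} shift, and conjugation by a report-dependent shift is not one of the elementary invariances you invoke and does not preserve strategyproofness. This is precisely the danger you yourself flagged (the reflection point moves when the report moves), and the relabeling-plus-shift decomposition does not eliminate it. Indeed the intermediate claim is false in general: take $f(x_1,x_2)=\mathrm{med}(x_1,x_2,0)$, a strategyproof (anonymous) generalized median. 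At the profile $(-1,2)$ we have $f=0$ and $m_{\bf x}=1/2$, so $\bar f$ places the facility at $1$ and the agent at $-1$ pays $2$; if he misreports $-3$, then $f(-3,2)=0$, the midpoint is $-1/2$, and $\bar f$ places the facility at $-1$, so he pays $0$. Thus $\bar f$ is manipulable (and so is the mixture $g$: truthful cost $1.5$ versus $0.5$ after the lie), so no substitution identity can derive its strategyproofness from strategyproofness of $f$ alone. (There is also a small algebra slip: the phantom point in $C(x_1,\bar f(x_1',x_2))$ is $x_1'+x_2-x_1$, not $x_1+x_2-x_1'$, but that is not the main issue.)

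The missing ingredient is shift invariance, which is the standing assumption on the class of mechanisms to which the paper applies this lemma, and which the paper's own one-line justification (``any misreport of the right agent with respect to $f$ induces the same cost as an equivalent misreport of the left agent with respect to $f_{mirror}$'') is implicitly relying on. If $f(x_1+c,x_2+c)$ has the distribution of $f(x_1,x_2)+c$ for every $c$, then $\mathbb{E}\,|f(x_1',x_2)-(x_1'+x_2-x_1)| = \mathbb{E}\,|f(x_1,x_2+x_1-x_1')-x_2|$, so a deviation $x_1\to x_1'$ of the left agent under $\bar f$ costs him exactly what the deviation $x_2\to x_2+x_1-x_1'$ of the right agent costs under $f$; strategyproofness of $f$ then gives strategyproofness of $\bar f$, and your mixing, symmetry and ratio arguments finish the proof. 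So either add this hypothesis (harmless for the way the lemma is used later) or find a genuinely different construction; as written, the statement for ``any strategyproof mechanism'' is not established by your argument.
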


\begin{proof}
Given a mechanism  $f$, we define the \emph{mirror mechanism} of $f$, $f_{mirror}$, to be such that for any profile $\bf x$, we have that  $\mathds{P}(f_{mirror}({\bf x}) \geq m_{\bf x}+b)=\mathds{P}(f({\bf x}) \leq m_{\bf x}-b)$ for all $b \in \mathbb{R}$ and location profiles ${\bf x}$. \\ 
Assume $f$  is a strategyproof mechanism. Symmetry dictates that $f_{mirror}$ must also be strategyproof, since
any misreport of the right agent with respect to $f$ induces the same cost as that of an equivalent a misreport of the left agent with respect to $f_{mirror}$ and vice versa. Moreover, since composing two strategproof mechanisms yields a strategyproof mechanism, the mechanism $g$ that picks $f$ with probability $1/2$ and $f_{mirror}$ with probability 1/2 is a strategyproof mechanism that is also symmetric. Finally, note that $g$ has the same approximation ratio as $f$ for all location profiles, since $f_{mirror}$ has the same approximation ratio as $f$.
\end{proof}

From now on, whenever we talk about a shift and scale invariant mechanism, we will also assume that it is symmetric. To simplify our proof of the main result, we will assume in addition that given a reported profile ${\bf x} = \{x_1, x_2\}$, the mechanism will only assign a facility location that lies in between $x_1$ and $x_2$, i.e. $\mathbb{P}(y \in [x_1, x_2]) = 1$, where $y$ is a random variable representing the facility location assigned by the mechanism. It is worth noting that the main result remains true even without this assumption, although the complete proof is somewhat long and cumbersome, so we will omit it. The next lemma deals with an equivalent condition for strategyproofness with respect to a shift, scale invariant and symmetric mechanism.

\begin{lemma}
A shift, scale invariant, and symmetric mechanism $f$ is strategyproof if and only if for any profile ${\bf x} = \{x_1, x_2\}$ with $x_1 = 0 < x_2$, the following condition holds:

- \[ \int_{(-\infty, x_2)} y dF(y) + \int_{(x_2, \infty)} y dF(y) + x_2\mathds{P}(Y=x_2) \geq 0,\]

where $Y = f(\bf{x})$ with c.d.f. $F$.
\end{lemma}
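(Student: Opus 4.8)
The plan is to exploit shift and scale invariance to reduce the general strategyproofness condition to a single canonical family of profiles, and then to translate the ``no profitable deviation'' inequality into the stated integral condition. By shift invariance we may assume the left agent is at $0$; by scale invariance (the $c>0$ branch) we may rescale so that the right agent is at any convenient positive location, and all profiles with $x_1 = 0 < x_2$ are equivalent up to scaling. So it suffices to characterize strategyproofness on profiles of the form $\{0, x_2\}$ with $x_2 > 0$, which is exactly what the lemma asserts; but we must be careful that strategyproofness is a statement about \emph{all} possible misreports by \emph{both} agents, so the reduction has to account for deviations that move an agent past the other agent (turning a ``left'' agent into a ``right'' agent and changing the midpoint).

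First I would fix the profile ${\bf x} = \{0, x_2\}$, $x_2 > 0$, write $Y = f({\bf x})$ with c.d.f. $F$, and note that by our standing assumption $Y \in [0, x_2]$ almost surely. The cost of the left agent (at $0$) under truthful reporting is $\mathbb{E}[|Y|] = \mathbb{E}[Y]$. Now consider the left agent misreporting to some $x_1' \in \mathbb{R}$. I would split into cases: (i) $x_1' \le 0$, so the reported profile is still $\{x_1', x_2\}$ with $x_1'$ on the left; (ii) $0 < x_1' \le x_2$, still left; (iii) $x_1' > x_2$, now the misreporting agent is the ``right'' agent of the profile $\{x_2, x_1'\}$. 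In each case I would use shift/scale invariance to express $f$ of the misreported profile in terms of $f$ evaluated on a normalized profile, and use symmetry to pin down the distribution. The condition that no such misreport helps the left agent, combined with the analogous condition for the right agent (which by symmetry gives the mirror-image inequalities), should collapse to a single inequality. The key observation is that the ``most tempting'' deviations are those that push the facility distribution as far as possible in the agent's preferred direction, and invariance forces the resulting facility location to be a scaled copy of $Y$; the first-order condition for $\mathbb{E}[\,\text{distance}\,]$ to be minimized at the truthful report then becomes a statement about $\mathbb{E}[Y]$ relative to the reference point, which after accounting for the atom at $x_2$ (a boundary case where the facility can sit exactly at the agent's report) is precisely $\int_{(-\infty,x_2)} y\,dF(y) + \int_{(x_2,\infty)} y\,dF(y) + x_2\,\mathbb{P}(Y = x_2) \ge 0$.

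For the converse direction I would assume the displayed integral inequality holds for every canonical profile and verify strategyproofness directly: given an arbitrary profile and an arbitrary misreport, normalize via shift/scale invariance to a canonical profile, express the agent's cost before and after the deviation using symmetry, and check that the inequality rules out a profitable deviation. Monotonicity considerations (a strategyproof mechanism's facility location should be monotone in each report, which here follows from invariance plus the support assumption) ensure that it suffices to check ``local'' deviations toward the other agent, and these are exactly controlled by the integral condition. I would organize this as: (1) reduce to canonical profiles; (2) enumerate the deviation types for one agent; (3) rewrite each deviation's cost differential using invariance and symmetry; (4) show the worst deviation's non-profitability is equivalent to the integral inequality; (5) invoke symmetry to get the other agent for free.

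The main obstacle I anticipate is the bookkeeping around deviations that cross the other agent's location — i.e., when a left agent reports something to the right of $x_2$, so that the identity of ``left'' and ``right'' swaps and the midpoint $m_{\bf x}$ jumps. Handling the $c<0$ branch of scale invariance (which involves the reflected profile $\{-x_2,-x_1\}$) correctly in these crossing cases, and making sure the atom $\mathbb{P}(Y=x_2)$ is accounted for with the right sign (since the agent can co-locate with the facility exactly when the facility sits at his report), is where the argument is genuinely delicate rather than routine; everything else is a careful but mechanical translation of ``expected distance is minimized at the truth'' into an integral inequality.
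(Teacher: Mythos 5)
Your plan follows the same route as the paper---normalize to profiles $\{0,x_2\}$ by shift invariance, reduce to a single agent by symmetry, and use scale invariance to write the post-deviation facility distribution as a scaled copy of $Y$---but it stops exactly where the proof has its content, so there is a genuine gap: you never compute the deviating agent's change in cost, and both directions of the lemma live in that computation. For the right agent misreporting $cx_2$ with $c>1$ (the only misreports not already killed by the support assumption $\mathds{P}(Y\in[x_1,x_2])=1$ together with scale invariance), one decomposes $\mathbb{E}|cY-x_2|-\mathbb{E}|Y-x_2|$ over the regions $(-\infty,x_2/c)$, $[x_2/c,x_2)$, $\{x_2\}$, $(x_2,\infty)$ and rearranges it as $-(c-1)\int_{(-\infty,x_2)}y\,dF+\int_{[x_2/c,x_2)}(2cy-2x_2)\,dF+(c-1)\int_{(x_2,\infty)}y\,dF+(c-1)x_2\mathds{P}(Y=x_2)$. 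Sufficiency of the displayed condition follows because the middle term is nonnegative; necessity follows by choosing, when the condition fails by some $\epsilon>0$, a $c>1$ with $\mathds{P}\bigl(Y\in[x_2/c,x_2)\bigr)<\epsilon/(4x_2)$, so that the middle term, which is at most $2(c-1)x_2\mathds{P}\bigl(Y\in[x_2/c,x_2)\bigr)$, cannot offset the deficit and the misreport is strictly profitable. Your proposal replaces all of this with the assertion that a ``first-order condition for expected distance to be minimized at the truth'' is ``precisely'' the stated inequality; that assertion \emph{is} the lemma, and the first-order framing is not directly available because the cost is not differentiable in the report when $F$ has atoms at or near $x_2$---which is exactly why the necessity direction needs the limiting choice of $c$ above. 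You explicitly defer the atom at $x_2$ and the crossing deviations as ``genuinely delicate,'' but those, together with the displayed decomposition, are the proof.

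Two further remarks. Analyzing the left agent instead of the right is legitimate (symmetry lets you pick either) but only adds bookkeeping: with the right agent, every report $cx_2<x_2$---including reports that cross the other agent or are negative---is dismissed in one line from the support assumption and scale invariance, so the case analysis you anticipate largely evaporates. Also, your two guiding heuristics are in tension and neither is justified: the claim that the most tempting deviation ``pushes the facility distribution as far as possible'' is false (a large exaggeration overshoots the agent's true location; the binding misreports are small exaggerations away from the other agent), and the appeal to monotonicity to restrict attention to ``local deviations toward the other agent'' points at the wrong deviations, since moves toward the other agent are never profitable here; moreover, for the direction ``condition implies strategyproofness'' you must handle all $c>1$ anyway, which the explicit bound does uniformly.
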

\begin{proof}
By shift invariance, it suffices to check strategyproofness for profiles where $x_1=0$ and by symmetry, we can assume without lost of generality that $x_2 \geq 0$. Moreover, any shift, scale invariant mechanism is trivially strategyproof with respect to the profile $\{0, 0\}$ since by definition, the mechanism would place all of the probability mass on $0$, which means that no agent has incentive to misreport his location. Thus, we can assume that $x_2 > 0$.\\ Since the mechanism is symmetric, it suffices to show that agent 2 cannot benefit by deviating from his true location if and only if the aforementioned condition hold. Since $x_2 > 0$, we can denote agent 2's deviation $x'_2$ as $cx_2$ for some $c \in \mathbb{R}$. Moreover, since $\mathbb{P}(Y \in [x_1, x_2]) = 1$ by assumption, we can further restrict ourselves to the case where  $c > 1$ because agent 2 has no incentive to deviate to a location $cx_2$ where $cx_2 < x_2$ as the mechanism is scale invariant.

When agent 2 reports his location to be $cx_2$, where $c > 1$, the change in cost incurred by agent 2 is: 
\[%
\begin{split}
C_{dev} - C_{orig} & =-(c-1) \int_{(-\infty, \frac{x_2}{c})} y dF(y) + \int_{[\frac{x_2}{c}, x_2)}((c+1)y - 2x_2)dF(y) +  (c-1)\int_{(x_2, \infty)} ydF(y) \\
& + (c-1)x_2\mathds{P}(Y=x_2) \\
& = -(c-1) \int_{(-\infty, x_2)} y dF(y) + \int_{[\frac{x_2}{c}, x_2)}(2cy - 2x_2)dF(y) +  (c-1)\int_{(x_2, \infty)} ydF(y)  + \\ &(c-1)x_2\mathds{P}(Y=x_2) \\
& \geq -(c-1) \int_{(-\infty, x_2)} y dF(y) +  (c-1)\int_{(x_2, \infty)} ydF(y)  + (c-1)x_2\mathds{P}(Y=x_2)  \\
\end{split}
\]
Hence, when condition $1$ holds, we have that $ -(c-1) \int_{(-\infty, x_2)} y dF(y) +  (c-1)\int_{(x_2, \infty)} yF(y)  + (c-1)x_2\mathds{P}(Y=x_2) \geq 0$, which means that $C_{dev} - C_{orig} \geq 0$. \\

To prove the other direction, suppose the condition does not hold, then there exists $\epsilon > 0$ small enough such that $- \int_{(-\infty, x_2)} y dF(y) + \int_{(x_2, \infty)} y dF(y) + x_2\mathds{P}(Y=x_2) \leq - \epsilon$ for some $x_2 > 0$. We choose $c > 1$ s.t. $\mathds{P}(Y \in [\frac{x_2}{c}, x_2)) < \frac{\epsilon}{4x_2} $, then we have that

\[%
\begin{split}
C_{dev} - C_{orig} & = -(c-1) \int_{(-\infty, x_2)} y dF(y) + \int_{[\frac{x_2}{c}, x_2)}(2cy - 2x_2)dF(y) +  (c-1)\int_{(x_2, \infty)} ydF(y)  \\
&+ (c-1)x_2\mathds{P}(Y=x_2) \\
& \leq (c-1)(-\int_{(-\infty, x_2)} y dF(y) + \int_{[\frac{x_2}{c}, x_2)}(2x_2)dF(y) +  \int_{(x_2, \infty)} ydF(y)  + x_2\mathds{P}(Y=x_2) ) \\
& < -(c-1)\frac{\epsilon}{2} < 0
\end{split}
\]
which contradicts strategyproofness of the mechanism. \footnote{Notice that given any shift, scale invariant, and symmetric mechanism $f$, in order to check whether $f$ is a strategyproof mechanism, it suffices to check whether $f$ is strateyproof for one particular profile. Without lost of generality, we can assume that $x_1 = 0$ and $x_2 = 1$. Here is a short proof of the claim. By the same argument as before, it suffices to check strategyproofness for all profiles $\{x_1, x_2\}$, where $x_2 > x_1 = 0$. Let $Y = f(\{0,1\})$, then $f(\{0, x_2\}) = x_2Y$. The mechanism is strategyproof with respective to the profile $\{x_1, x_2\}$ if and only if for all $c \in \mathbb{R}$, we have that \[E[|cx_2Y - x_2|] \geq E[|x_2Y - x_2|].\] Since $x_2 >0$, this follows directly from the strategyproofness condition for the profile $\{0,1\}$: 
\[E[|cY - 1|] \geq E[|Y -1|]  \ \forall c \in \mathbb{R}.\]}
\end{proof}

Given a strategyproof, shift, scale invariant and symmetric mechanism, the next lemma demonstrates how to find another strategyproof, shift, scale invariant and symmetric mechanism that restricts the probability assignment to $x_1, x_2$, and $m_{\bf x}$ for all profile ${\bf x}$ and simultaneously gives a better approximation than the original mechanism.

\begin{lemma}
Let $f$ be a strategyproof shift, scale invariant  and symmetric mechanism, where $\mathds{P}(f({\bf x}) \in [x_1, x_2]) = 1$ for location profile ${\bf x} = \{x_1, x_2 \}$ with $x_2 > x_1$, then there exists another strategyproof mechanism $g$ such that $\mathds{P}(g({\bf x}) \in \{x_1, x_2, m_{\bf x}\}) = 1$ for the (and thus every) location profile ${\bf x}$ and that  $E[sc(g({\bf x}), {\bf x})] \leq E[sc(g({\bf x}), {\bf x})]$.  Furthermore, $g$ satisfies shift, scale invariance and symmetry.\end{lemma}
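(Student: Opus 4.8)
The plan is to pass to the canonical two-agent profile $\{0,1\}$ — by shift and scale invariance it suffices to construct $g$ there — and to use that the per-location social cost $h(y):=sc(y,\{0,1\})=(|y|^{p}+|1-y|^{p})^{1/p}$ is the $\ell_p$-norm of the affine map $y\mapsto(y,y-1)$, hence a \emph{convex} function of $y$ that is symmetric about its unique minimizer $y=\tfrac12$, with $h(0)=h(1)=1$. Write $Y=f(\{0,1\})$. By hypothesis $Y\in[0,1]$ almost surely and, since $f$ is symmetric, the law of $Y$ is symmetric about $\tfrac12$; in particular $\mathds{P}(Y=0)=\mathds{P}(Y=1)=:q$, $\mathbb{E}[Y]=\tfrac12$, and the conditional law of $Y$ given $Y\in(0,1)$ is itself symmetric about $\tfrac12$.

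I would then define $g$ on $\{0,1\}$ by keeping the two endpoint atoms of $Y$ and transporting \emph{all} the remaining mass to the midpoint: let $g(\{0,1\})$ put probability $q$ on $0$, probability $q$ on $1$, and probability $1-2q$ on $\tfrac12$. Extend $g$ to an arbitrary profile $\{x_1,x_2\}$ with $x_1<x_2$ by $g(\{x_1,x_2\})=x_1+(x_2-x_1)\,g(\{0,1\})$ (and $g(\{x,x\})=\delta_x$); this is the only extension consistent with shift and scale invariance, so $g$ is shift and scale invariant by construction, it is symmetric because $g(\{x_1,x_2\})-m_{\{x_1,x_2\}}=(x_2-x_1)\big(g(\{0,1\})-\tfrac12\big)$ and $g(\{0,1\})$ is symmetric about $\tfrac12$, and at every profile $g$ is supported on $\{x_1,x_2,m_{\mathbf{x}}\}$, as required.

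What remains is the cost inequality and strategyproofness. For the cost, since $h\ge h(\tfrac12)$ everywhere and $h(0)=h(1)=1$ we get $\mathbb{E}[sc(g(\{0,1\}),\{0,1\})]=2q+(1-2q)h(\tfrac12)$, while $\mathbb{E}[sc(f(\{0,1\}),\{0,1\})]=2q+\mathbb{E}[h(Y)\mathbf{1}\{Y\in(0,1)\}]\ge 2q+(1-2q)h(\tfrac12)$, so $g$ has no larger social cost than $f$ at $\{0,1\}$, and by shift/scale invariance this passes to every profile. For strategyproofness I would invoke the criterion of the preceding lemma: it suffices to verify the displayed inequality at every profile $\{0,x_2\}$, which by scale invariance reduces to the profile $\{0,1\}$; there, using $Y\in[0,1]$ and $\mathbb{E}[Y]=\tfrac12$, the inequality simplifies to a condition on the law that depends only on the size of its atom at $1$ (in fact it becomes $\mathds{P}(Y=1)\ge\tfrac14$). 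Since $g(\{0,1\})$ is likewise $[0,1]$-valued and symmetric and has the same atom $q$ at $1$, it satisfies that criterion precisely when $f(\{0,1\})$ does; as $f$ is strategyproof, so is $g$.

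The only genuinely delicate step is this last transfer of strategyproofness: a priori, collapsing the entire interior part of $Y$ to a single point could ruin incentives. It does not, because symmetry of $f$ forces that interior mass to have conditional mean exactly $\tfrac12$, so moving it to $\tfrac12$ is mean-preserving on $(0,1)$ and leaves invariant the first-moment quantity that — by the preceding lemma — governs strategyproofness; had the mass been transported anywhere other than its own barycentre, this would fail. Everything else (convexity and symmetry of $h$, and checking that the shift/scale extension of $g$ is again shift invariant, scale invariant, and symmetric) is routine.
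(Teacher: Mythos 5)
Your proposal is correct and follows essentially the same route as the paper's proof: keep $f$'s atoms at the endpoints, transport the interior mass to the midpoint (which weakly lowers social cost since the midpoint minimizes it), and use symmetry of $f$'s distribution about $m_{\bf x}$ to see that the first-moment quantity in the preceding strategyproofness criterion is unchanged, so $g$ inherits strategyproofness. The only cosmetic difference is that you normalize to the profile $\{0,1\}$ and extend by shift and scale invariance, whereas the paper carries out the same computation directly at a general profile $\{0,x_2\}$.
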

\begin{proof}
Now, let $g$ be the mechanism that satisfies $\mathds{P}(g({\bf x})=x_1)=\mathds{P}(f({\bf x})=x_1)$, $\mathds{P}(g({\bf x})=x_2)=\mathds{P}(f(x)=x_2)$, $\mathds{P}(g({\bf x})=m_x)=1-\mathds{P}(g({\bf x})=x_1)-\mathds{P}(g({\bf x})=x_2)$. Note that since $m_x$ minimizes the social cost function for the profile ${\bf x}$, $g$ certainly provides a weakly better approximation ratio than $f$. 
By shift invariance, we can assume wlog that $0 = x_1 \leq x_2$, then an alternative way to show strategyproofness is to check to see that the condition of the lemma 1 is satisfied by $g$.  Since $f$ is a strategyproof mechanism, the condition implies that 
\[%
\begin{split}
0 & \leq - \int_{(0, x_2)} y d(F(y)) + x_2\mathds{P}(f({\bf{x}})=x_2)  \\
& = -\int_{(-\frac{x_2-x_1}{2}, \frac{x_2-x_1}{2})} (m_{\bf{x}} + u) d(F(m_{\bf{x}} + u)) + x_2\mathds{P}(f({\bf{x}})=x_2) \\
& = - m_{\bf{x}}\mathds{P}({f({\bf x}) \in (x_1,x_2)}) + -\int_{(-\frac{x_2-x_1}{2}, \frac{x_2-x_1}{2})}ud(F(m_{\bf{x}} + u)) +  x_2\mathds{P}(f({\bf{x}})=x_2) \\
& = - m_{\bf{x}}(1-\mathds{P}(g({\bf x})=x_1)-\mathds{P}(g({\bf x})=x_2)) +  x_2\mathds{P}(g({\bf{x}})=x_2) 
\end{split}
\]
Note, $ -\int_{(-\frac{x_2-x_1}{2}, \frac{x_2-x_1}{2})}ud(F(m_{\bf{x}}+u)) = 0$ because the distribution is symmetric around $m_{\bf{x}}$. Hence, the condition is satisfied for the mechanism $g$.
\end{proof}

Thus, $g$ is a symmetric strategyproof mechanism that provides a weakly better approximation ratio than $f$ and which satisfies $\mathds{P}(g({\bf x}) \in \{x_1,x_2,m_{\bf x}\})=1$ for every location profile $\bf x$. \\

\noindent Now we are ready to prove the main theorem.
\begin{thm} \label{4}
The LRM mechanism gives the best approximation ratio among all strategyproof mechanisms that are shift and scale invariant. 
\end{thm}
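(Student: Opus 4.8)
The plan is to invoke the three preceding reduction lemmas to collapse the problem to a one-parameter family of mechanisms, read off the approximation ratio of each as a function of that parameter, and then carve out exactly which parameter values are strategyproof. By the symmetrization lemma and the support-restriction lemma, every strategyproof, shift- and scale-invariant mechanism $f$ with $\mathbb{P}(f({\bf x}) \in [x_1,x_2]) = 1$ can be replaced by a strategyproof, shift- and scale-invariant, \emph{symmetric} mechanism $g$ with $\mathbb{P}(g({\bf x}) \in \{x_1, x_2, m_{\bf x}\}) = 1$ and a weakly smaller approximation ratio; since the LRM mechanism itself belongs to this restricted class, it suffices to show that every such $g$ has approximation ratio at least that of LRM. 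By shift and scale invariance $g$ is determined by its behavior on the profile $\{0,1\}$, and symmetry about the midpoint $\tfrac12$ forces $\mathbb{P}(g=0) = \mathbb{P}(g=1) =: q \in [0,\tfrac12]$ and $\mathbb{P}(g=\tfrac12) = 1-2q$; the value $q = \tfrac14$ is precisely LRM.

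Next I would compute the approximation ratio $\rho(q)$. On a profile $\{x_1,x_2\}$ with $t := x_2-x_1 > 0$, the optimal location is the midpoint with social cost $t\,2^{1/p-1}$, whereas $g$ incurs social cost $t$ at either endpoint and $t\,2^{1/p-1}$ at the midpoint, for an expected social cost of $2qt + (1-2q)\,t\,2^{1/p-1}$. Hence $\rho(q) = 1 + 2q\,(2^{1-1/p}-1)$, which does not depend on the profile and, since $2^{1-1/p} \ge 1$ for $p \ge 1$, is non-decreasing in $q$. So within the class the smallest ratio is achieved at the smallest strategyproof value of $q$.

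It remains to show that $g$ is strategyproof if and only if $q \ge \tfrac14$. Using shift and scale invariance one reduces to checking that agent $2$ has no profitable deviation from the profile $\{0,1\}$, and by symmetry together with the $[x_1,x_2]$-support assumption only overreports $x_2' = c > 1$ matter; scale invariance gives $g(\{0,c\}) = c\,g(\{0,1\})$, so the change in agent $2$'s cost is $\mathbb{E}\lvert cY - 1\rvert - \mathbb{E}\lvert Y - 1\rvert$ with $Y = g(\{0,1\})$. A direct evaluation, splitting on whether $cY \lessgtr 1$, gives this difference equal to $(c-1)(2q-\tfrac12)$ for $1 < c < 2$ and at least $(c-1)(2q-\tfrac12)$ for $c \ge 2$; equivalently, the strategyproofness characterization of the preceding lemma specializes here to $x_2\,\mathbb{P}(Y=x_2) \ge \int_{[0,x_2)} y\,dF(y)$, i.e. again to $q \ge \tfrac14$. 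Combining this with the monotonicity of $\rho$, the minimum of $\rho$ over all strategyproof members of the class is attained at $q=\tfrac14$, that is, at LRM, with value $\tfrac12\big(1 + 2^{1-1/p}\big)$; therefore LRM is optimal (under the simplifying assumption, made in the preceding discussion, that the facility is always placed within $[x_1,x_2]$).

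The one genuinely delicate piece is this final deviation analysis establishing $q \ge \tfrac14$: one must handle the piecewise-linear form of $\lvert cY - 1\rvert$ carefully and justify, via shift and scale invariance, that it suffices to test deviations on the single profile $\{0,1\}$. Everything else — the parametrization of the class, the computation of $\rho(q)$, and its monotonicity — is routine.
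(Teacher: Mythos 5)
Your proposal is correct and follows essentially the same route as the paper: reduce via the symmetrization and support-restriction lemmas to symmetric three-point mechanisms parametrized by the endpoint mass $q$, observe the ratio is increasing in $q$, and show strategyproofness forces $q \geq \tfrac14$, so LRM ($q=\tfrac14$) is optimal with ratio $\tfrac12\bigl(1+2^{1-1/p}\bigr)$. The only difference is that you explicitly carry out the deviation computation (equivalently, specialize the paper's strategyproofness condition to $x_2\,\mathds{P}(Y=x_2) \geq \int_{[0,x_2)} y\,dF(y)$), a step the paper dismisses as ``not difficult to see,'' and your computation is accurate.
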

\begin{proof}
By the previous lemma, it suffices to search among the class of strategyproof shift, scale invariant  and symmetric mechanisms where any element $f$ of the class satisfies the property that $\mathds{P}(f({\bf x}) \in \{x_1,x_2,m_{\bf x}\})=1$. It is not difficult to see that in order to enforce strategyproofness, we must have that $\mathds{P}(f({\bf x}) \in \{x_1,x_2 \}) \geq 0.5$, which implies that among all such mechanisms, LRM provides the best approximation ratio of $0.5(2^{1-\frac{1}{p}} + 1)$.
\end{proof}

An immediate consequence of Theorem~\ref{4} is the following corollary.

\begin{corollary}
Any strategyproof shift and scale invariant mechanism has an approximation of at least $0.5(2^{1-\frac{1}{p}} + 1)$ in the worst case.
\end{corollary}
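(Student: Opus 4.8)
The plan is to derive the corollary directly from Theorem~\ref{4}, after pinning down the exact value of the LRM mechanism's worst-case approximation ratio. Theorem~\ref{4} asserts that among all strategyproof shift and scale invariant mechanisms the LRM mechanism attains the smallest possible worst-case approximation ratio; equivalently, no mechanism in this class can have a worst-case ratio strictly below that of LRM. So the corollary follows as soon as we identify this common lower value as $0.5(2^{1-\frac{1}{p}}+1)$.

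First I would recall the action of LRM on a two-agent profile ${\bf x}=\{x_1,x_2\}$ with $d:=x_2-x_1>0$: it locates the facility at $x_1$ with probability $\tfrac14$, at the midpoint $m_{\bf x}$ with probability $\tfrac12$, and at $x_2$ with probability $\tfrac14$ (this is exactly the extremal mechanism identified in the proof of Theorem~\ref{4}, where strategyproofness forces $\mathds{P}(f({\bf x})\in\{x_1,x_2\})\geq 0.5$ and one minimizes expected social cost by placing the residual mass on the cost-minimizing point $m_{\bf x}$). The optimal social cost is achieved at $m_{\bf x}$ and equals $\bigl(2(d/2)^p\bigr)^{1/p}=d\,2^{\frac1p-1}$, while locating at either endpoint incurs social cost exactly $d$. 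Hence the expected social cost of LRM on this profile is $\tfrac14 d+\tfrac12 d\,2^{\frac1p-1}+\tfrac14 d=\tfrac12 d(1+2^{\frac1p-1})$, and dividing by the optimum gives a ratio of $\tfrac12(2^{1-\frac1p}+1)$, independent of the profile; as argued in the proof of Theorem~\ref{4} this is already the worst case for LRM, so the worst-case approximation ratio of LRM equals $0.5(2^{1-\frac1p}+1)$.

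Combining the two observations: if some strategyproof shift and scale invariant mechanism $f$ had worst-case approximation ratio strictly less than $0.5(2^{1-\frac1p}+1)$, then $f$ would strictly beat LRM, contradicting the optimality of LRM established in Theorem~\ref{4}; hence every such $f$ has worst-case ratio at least $0.5(2^{1-\frac1p}+1)$. I do not expect a genuine obstacle here — the corollary is essentially a restatement of Theorem~\ref{4} once the constant is made explicit — the only point worth stating carefully is that a lower bound witnessed on two-agent instances is automatically a lower bound on the worst case over all instances, since two-agent profiles are a particular special case (and the two-agent restriction of any strategyproof shift/scale invariant mechanism is again such a mechanism, so Theorem~\ref{4} applies to it).
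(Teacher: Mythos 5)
Your argument is correct and follows essentially the same route as the paper: the corollary is stated there as an immediate consequence of Theorem~\ref{4}, whose proof already identifies LRM's ratio as $0.5(2^{1-\frac{1}{p}}+1)$, and you simply make that constant explicit by computing LRM's expected cost on a two-agent profile (which, by shift and scale invariance, is profile-independent) and then invoke the optimality of LRM. Your closing remark that a two-agent lower bound carries over to the general worst case matches the paper's intended reading, so no gap remains.
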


\section{Discussion}
The most important open question in our view is whether or not randomization
can help improve the worst-case approximation ratio for general $L_p$ norm
cost functions. The case of $p = 1$ is uninteresting because there is an
optimal deterministic mechanism; for $p = 2$ and $p = \infty$ we already
saw that randomization improves the worst-case approximation ratio, but
we do not know if this is simply a happy coincidence, or if one can obtain
similar results for all $p > 2$. Our negative result in Section 4 implies that any improvement by randomization would require a different approach than the existing mechanisms.

There are many other natural questions as well: for instance, what happens
for more general topologies such as trees or cycles? Is it possible to
characterize all randomized strategy-proof mechanisms on specific topologies?

Finally, we believe it is of interest to consider more general cost functions for the individual agents. The properties established for
the LRM and many other randomized mechanisms depend on the assumption that agents
incur costs that are \emph {exactly equal} to the distance to access the
facility. Clearly, this is a very restrictive assumption, and working
with more general individual agent costs is a promising direction
to broaden the applicability of this class of models.\footnote{For deterministic mechanisms, our result continues to hold for arbitrary single peaked cost functions, as long as the social cost remains an $L_p$ measure of the distances.}


\bibliographystyle{plain}
\bibliography{newfl}

\end{document}